\documentclass[a4paper,UKenglish,cleveref, autoref, thm-restate]{lipics-v2021}

\hideLIPIcs  

\graphicspath{{./figures/}}

\bibliographystyle{plainurl}

\title{Flip Graphs of Pseudo-Triangulations With Face Degree at Most 4} 


\author{Maarten L\"offler}{Utrecht University, the Netherlands}{m.loffler@uu.nl}{}{}

\author{Tamara Mchedlidze}{Utrecht University, the Netherlands}{t.mtsentlintze@uu.nl}{}{}
\author{David Orden}{University of Alcal\'a, Spain}{david.orden@uah.es}{}{Supported by Project PID2019-104129GB-I00 funded by MCIN/ AEI /10.13039/501100011033.}
\author{Josef Tkadlec}{Charles University, Czech Republic}{josef.tkadlec@iuuk.mff.cuni.cz}{}{Supported by the Center for Foundations of Modern Computer Science (Charles University project UNCE/SCI/004) and by project PRIMUS/24/SCI/012 from Charles University.}
\author{Jules Wulms}{TU Eindhoven, the Netherlands}{j.j.h.m.wulms@tue.nl}{}{}

\authorrunning{M. L\"offler, T. Mchedlidze, D. Orden, J. Tkadlec, and J. Wulms} 

\Copyright{Maarten L\"offler, Tamara Mchedlidze, David Orden, Josef Tkadlec, and Jules Wulms} 

\ccsdesc[100]{Theory of computation~Computational geometry} 

\keywords{Pseudo-triangulations, Edge flips, Double chain} 





\acknowledgements{This work was initiated during the GG Week 2023, held at the University of Alcal\'a, Spain.}

\nolinenumbers 


\newcommand{\reals}{\mathbb{R}}

\begin{document}

\maketitle

\begin{abstract}
A \emph{pseudo-triangle} is a simple polygon with exactly three convex vertices, and all other vertices (if any) are distributed on three concave chains.
A \emph{pseudo-triangulation}~$\mathcal{T}$ of a point set~$P$ in~$\reals^2$ is a partitioning of the convex hull of~$P$ into pseudo-triangles, such that the union of the vertices of the pseudo-triangles is exactly~$P$.
We call a size-4 pseudo-triangle a \emph{dart}.
For a fixed $k\geq1$, we study $k$-dart pseudo-triangulations ($k$-DPTs), that is, pseudo-triangulations in which exactly $k$ faces are darts and all other faces are triangles.
We study the flip graph for such pseudo-triangulations, in which a flip exchanges the diagonals of a pseudo-quadrilatral.
Our results are as follows.
We prove that the flip graph of $1$-DPTs is generally not connected, and show how to compute its connected components. 
Furthermore, for $k$-DPTs on a point configuration called the \emph{double chain} we analyze the structure of the flip graph on a more fine-grained level. 
\end{abstract}

\section{Introduction}

A \emph{pseudo-triangle} is a simple polygon with exactly three convex vertices, and (if any) all other vertices distributed on three concave chains. A \emph{pseudo-triangulation}~$\mathcal{T}$ of a point set~$P$ in~$\reals^2$ is a partitioning of the convex hull of~$P$ into pseudo-triangles, such that the union of the vertices of the pseudo-triangles is exactly~$P$. 

Pseudo-triangulations were introduced in the early 1990's by Pocchiola and Vegter to study the visibility complex of disjoint convex regions~\cite{pocchiola1996visibility} and by Chazelle et al.\ for ray shooting in polygons~\cite{chazelle1994ray}. It was in the early 2000's that pseudo-triangulations of point sets became popular, when Streinu showed that \emph{pointed} pseudo-triangulations of point sets, those in which every vertex is \emph{pointed}, i.e., incident to an angle larger than~$\pi$, are minimally rigid~\cite{streinu2005pseudo} and used this for a solution of the Carpenter's Rule Problem. The converse statement that every planar minimally rigid graph admits a drawing as a pointed pseudo-triangulation was proved by Haas et al.~\cite{haas2003planar}, later generalized to non-minimally rigid and non-pointed pseudo-triangulations by Orden et al.~\cite{orden2007combinatorial} using the notion of \emph{combinatorial pseudo-triangulation}, an embedding of a planar graph together with a labelling of the angles mimicking the properties of angles in a geometric pseudo-triangulation.

Among the many other results on pseudo-triangulations, for which we refer to the survey by Rote et al.~\cite{rote2008pseudo}, let us highlight the notion of a \emph{flip}~\cite{aichholzer2003flips,orden2005polytope}. There are three types of flips. The first one  follows the spirit of flips in triangulations, exchanging the only interior edge in the two geodesic diagonals of a pseudo-quadrilateral, as in Figure~\ref{fig:flips-example}a-b.
This includes the case of a degenerate pseudo-quadrilateral,  with two consecutive corners merged into a single one, as in Figure~\ref{fig:flips-example}c where the two lowest vertices from Figure~\ref{fig:flips-example}b have been merged. The two remaining types of flip insert or remove an interior edge to obtain another pseudo-triangulation, respectively increasing or decreasing by one the number of pointed vertices (this will be fixed in the present work, so such types of flip will not appear). The flip graph for pseudo-triangulations turns out to be connected and have diameter in~$O(n\log n)$~\cite{bereg2004transforming}. 

\begin{figure}
    \centering
    \includegraphics{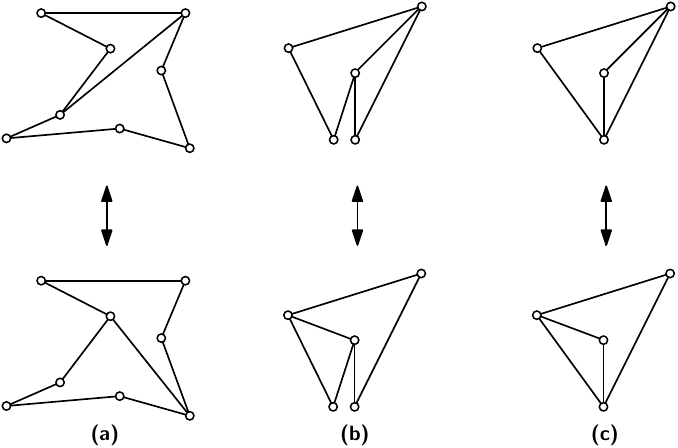}
    \caption{Some flips in pseudo-triangulations.}
    \label{fig:flips-example}
\end{figure}

The fact that a pseudo-triangle can have linear size and, therefore, the flip operation cannot be computed in constant time as for triangulations, led to the consideration of pseudo-triangulations with bounded size of the internal faces. In particular, Kettner et al.~\cite{kettner2003tight} showed that for every point set there is a pointed pseudo-triangulation with internal faces of size~3 or~4, called a \emph{4-pointed pseudo-triangulation} or \emph{4-PPT}. These 4-PPTs fulfill nice properties, like being properly 3-colorable while that question is NP-complete for general pseudo-triangulations~\cite{aichholzer20153colorability}. By contrast, some properties known for general pseudo-triangulations turned out to be elusive for 4-PPTs. In particular, a long-standing open problem is whether the flip graph of 4-PPTs is connected, which has only been proved for combinatorial 4-PPTs~\cite{aichholzer2014flips}. The aim of this work is to generalize this problem and prove results on cases that can provide additional insight towards solving that open problem.

We consider flips in \emph{4-pseudo-triangulations} or \emph{4-PT}s, which are defined as general, not necessarily pointed, pseudo-triangulations with internal faces of size~3 or~4. We call a size-4 pseudo-triangle a \emph{dart}, with its \emph{tail} being the concave (also called reflex) vertex, its \emph{tip} being the vertex not adjacent to the tail, and its two \emph{wings} being the remaining two vertices. The segment between the tip and tail of a dart will be referred to as its \emph{spine}, though such a segment is necessarily not an edge and therefore missing in a dart. In a 4-PT, each interior pointed vertex is the tail of a dart, and 4-PTs with $k$ interior pointed vertices are 4-PTs with $k$ darts or \emph{$k$-dart 4-PTs}, which will be denoted as $k$-DPTs. See Figure~\ref{fig:dpt-example}.

For a size-$n$ point set~$P$ with a convex hull of size~$h \leq n$, the maximum number of interior pointed vertices is $n-h$, and therefore the maximum number of darts in a 4-PT is $n-h$ as well. In particular, 4-PPTs coincide with $k$-DPTs for $k=n-h$, since they are those 4-PTs in which every interior vertex is pointed. Thus, the aforementioned open problem in~\cite{aichholzer2014flips} asks about the connectivity of the flip graph of $k$-DPTs for the largest possible value of $k$, that is $k=n-h$. Our first goal is to look at the opposite end of the range and study the flip graph of $k$-DPTs for the smallest possible value of $k$, that is~$k=1$. This corresponds to 4-PTs with only one dart, i.e., only one interior pointed vertex.
We show that the resulting flip-graph of $1$-DPTs is not connected and we show how to compute its connected components.
Furthermore, for $k$-DPTs on a frequently-studied point configuration, the double chain~\cite{aichholzer2008number}, we analyze the structure of the flip graph on a more fine-grained level.

\begin{figure}
    \centering
    \includegraphics{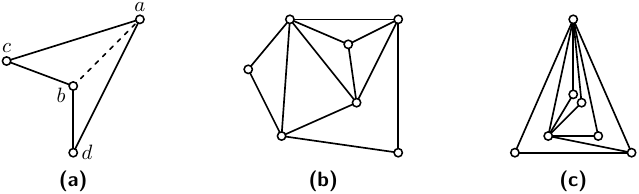}
    \caption{\textbf{\textsf{(a)}} A dart with tip~$a$, tail~$b$, wings~$c$ and~$d$, and a dashed spine. \textbf{\textsf{(b)}} A $1$-DPT on 7 points. \textbf{\textsf{(c)}} An $(n-h)$-DPT with $n=7$ and $h=3$; each vertex not on the convex hull is a dart~tail.}
    \label{fig:dpt-example}
\end{figure}


\section{Components of the Flip Graph for 1-DPTs}
To compute the number of components of the flip graph in the presence of a single dart, we first partition the class of $1$-DPTs on a point set~$P$ into separate classes~$\mathcal{G}_p$ where~$p\in P$. Each such class~$\mathcal{G}_p$ consists of all those $1$-DPTs that have the tail of the dart located at~$p$. We show that all $1$-DPTs in~$\mathcal{G}_p$ are in the same connected component of the flip graph. The proof consists of three steps: First, we show that for any dart~$d$ on~$P$ there exists a \emph{dart triangle}, defined as a triangle on the three corners of a dart containing no more points of~$P$ than the tail of that dart (see Figure~\ref{fig:dart-triangle}a), with the property that such a dart triangle shares the tip and tail with the original dart~$d$ (but may have other wings, see Figures~\ref{fig:dart-triangle}b-d).

\begin{lemma}\label{lem:dart-triangle}
    For any $1$-DPT of a point set~$P$ containing an arbitrary dart~$d$ there is a flip sequence to a $1$-DPT with a dart triangle that has the same tip and tail as~$d$.
\end{lemma}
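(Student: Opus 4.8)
The plan is to introduce a potential that measures how far the dart is from being a dart triangle, and to show that while the potential is positive there is always a single diagonal flip — one that keeps the tip and the tail where they are — lowering it; iterating such flips drives the potential to zero, at which point the dart is a dart triangle.

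Write $d = acbe$ with tip $a$, tail $b$ and wings $c, e$, so that $b$ lies in the interior of $\triangle ace$. Since $d$ is the only dart, every other face is a triangle, and hence the edges incident to $b$ cut the complementary (thus ${<}\pi$) wedge at $b$ — the one not spanned by the reflex angle of $d$ — into a fan of triangles $\triangle bcf_1, \triangle bf_1f_2, \dots, \triangle bf_m e$. Let the potential be this number $m$ of interior fan edges $bf_1, \dots, bf_m$. It is $0$ precisely when $d$ is a dart triangle: if $m = 0$, the lone fan triangle is $\triangle bce$ and $\triangle ace = d \cup \triangle bce$ contains no point of $P$ beyond $a, c, e$ and the tail $b$; conversely, if $\triangle ace$ contains no point of $P$ other than $b$, then $b$ has no neighbour strictly inside $\triangle bce$, forcing $m = 0$.

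Now assume $m \ge 1$. The ray from $b$ pointing away from $a$ leaves $\triangle ace$ through the side $ce$, so its direction $\sigma$ lies in the fan wedge, hence in one of the $m + 1$ angular gaps of the cyclic sequence $bc, bf_1, \dots, bf_m, be$. If $\sigma$ does not lie in the gap between $bc$ and $bf_1$, then $f_1$ is on the same side of line $ab$ as $c$; I then flip the interior edge $cb$ of the pseudo-quadrilateral $d \cup \triangle bcf_1$. Since $f_1$ lies strictly inside $\triangle ace$ and on the $c$-side of line $ab$, the segment $af_1$ runs through the interior of $d$ and crosses the edge $cb$, so the flip replaces $cb$ by $af_1$ and the two resulting faces are the triangle $\triangle acf_1$ and a quadrilateral face $af_1be$. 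That quadrilateral is a dart with tip $a$ and tail $b$: its reflex corner is not $a$ (since $a \notin \triangle cbe \supseteq \triangle f_1be$), not $e$ (since $\triangle af_1b$ lies in the closed half-plane bounded by line $ab$ that contains $c$, which excludes $e$), and not $f_1$ (since $f_1 \notin d \supseteq \triangle abe$), so it is $b$. As a diagonal flip this preserves the number of pointed vertices, so the result is again a $1$-DPT; and its fan at $b$ is the previous fan with the edge $bc$ removed, so the potential drops by one. If instead $\sigma$ does lie in that first gap, then it cannot lie in the last gap, between $bf_m$ and $be$ (these are distinct gaps as $m \ge 1$), so the neighbour $g := f_m$ of $b$ across $be$ is on the $e$-side of line $ab$, and the mirror-image flip of the edge $be$ produces a dart $acbg$ with tip $a$, tail $b$, and the potential again reduced by one. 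Iterating until the potential vanishes delivers the promised dart triangle.

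The auxiliary facts used — that $\sigma$ lands in the fan wedge, the dichotomy between the first and the last gap, and the standard fact that two pseudo-triangles sharing an edge bound a pseudo-quadrilateral whose diagonal flip behaves as stated — are routine. The one point that genuinely needs care is certifying that the reflex corner of the new quadrilateral face is $b$, and not one of the two wings; this is exactly where the requirement that the tail stay at $b$ is used, and it is settled by the three containments above.
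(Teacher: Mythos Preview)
Your argument is correct and takes a genuinely different route from the paper's. The paper adds the spine to obtain an actual triangulation, invokes a black-box result on constrained-triangulation flip connectivity (Dyn et al.) to connect the two wings by an edge, and then---if the triangle on tip and wings is not yet empty---runs a second phase that rotates the wings inward along the upper envelope of the points trapped below the dart. You replace both phases by a single local potential argument: at each step you pull one wing one fan-step closer, choosing which wing by the side of the spine-line on which the extremal fan vertex sits. This is shorter, self-contained, and yields an explicit $\deg(b)-2$ bound on the number of flips; the paper's approach trades that directness for modularity by reducing to standard triangulation machinery.

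Two of your containment justifications are not literally true, though the conclusions they aim at are. First, $f_1$ need not lie in $\triangle ace$: it can sit arbitrarily far past the segment $ce$, so that is not why $af_1$ crosses $cb$. The correct reason is that $f_1$ is on the $c$-side of line $ab$ (so $af_1$ misses $be$, exactly as you argue for the $e$-case) and on the opposite side of line $cb$ from $a$ (automatic, since $\triangle bcf_1$ is the face across $cb$ from $d$). Second, $\triangle f_1be \subseteq \triangle cbe$ fails for the same reason, so it does not certify that $a$ is convex in $af_1be$; but this is free once you know the flip of a pseudo-quadrilateral returns two pseudo-triangles and that $b$ keeps its reflex angle (its incident boundary edges $bf_1$ and $be$ are untouched by the flip), forcing $a$, $f_1$, $e$ to be convex. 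With these two local patches the proof stands.
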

\begin{proof}
    Consider an arbitrary $1$-DPT~$T$ with dart~$d$. If the wings of~$d$ are connected by an edge in~$T$, and the triangle formed by the wings and tail of~$d$ is empty, then we are done. Hence, assume that either the wings of~$d$ are not connected by an edge, or some vertices of $P$ are located in the triangle formed by the wings and tail of~$d$. We now consider a triangulation~$T'$ of~$P$ that contains all edges of~$T$, and additionally contains the spine of~$d$ as an edge. We refer to the edges of~$d$, as well as the spine edge, as the set~$D$. We will now define a flip sequence on $T'$ that considers only triangulations in which these edges of~$D$ are constrained, meaning that these edges cannot take part in edge flips. 
    
    First, consider the case in which the wings of~$d$ are not connected by an edge in~$T'$. There exists at least one triangulation~$T^-$ that contains all the edges in~$D$, as well as the edge between the wings of~$d$: Simply start with only these edges and arbitrarily add edges that do not cross any existing edges, until we reach a maximal planar graph on~$P$. Dyn et al.~\cite{DBLP:journals/cagd/DynGR93} proved that there is a sequence~$S$ of flips that transforms~$T'$ into~$T^-$ while keeping all edges in~$D$ in place. Since the spine edge is never moved, we can apply the flip sequence~$S$ to~$T$, to transform~$T$ into a pseudo-triangulation in which the wings of~$d$ are connected by an edge. If the triangle~$t$ defined by the tip and wings of~$d$ now contains only the tail of~$d$, then we have found a flip sequence that creates a dart triangle sharing its tip and tail with~$d$. Otherwise we are in the second case, thus what remains is to prove the second case.
    
    Second, consider the case where some subset of points of~$P$ are located in the triangle formed by the wings and tail of~$d$. We call this subset, together with the wings of~$d$, the set~$P'\subset P$. We can use the flip sequence of the first case to ensure that the wings of~$d$ are connected by an edge. See Figure~\ref{fig:dart-triangle}b for the construction. Assume without loss of generality that the dart~$d$ is oriented such that the spine, from tail to tip, points straight upwards. If this is not the case, simply rotate the complete point set until we achieve this configuration. Now extend the spine downwards, to split the points in~$P'$ in subsets~$L$ and $R$ to the left and right of the spine extension, respectively. We show that there is a flip sequence that ensures that $L$ and $R$ become singletons: They consist of only the wings of~$d$. In the base case, $L$ and $R$ are already singletons, so for the step case assume at least one of them has size at least two. When $L$ and/or~$R$ are of this size, we respectively find the points $l\in L$ and/or $r\in R$, that form an empty triangle with the tail and respective wing of~$d$.
    
    Consider the upper envelope of~$P'$. We choose $l\in L$ to be the rightmost upper envelop vertex in~$L$, and similarly $r\in R$ the leftmost upper envelope vertex in~$R$. We now describe the flip sequence to move the wings to~$l$ and~$r$. There is a pseudo-triangulation in which the upper envelope vertices are all connected to the tail of $d$, and from left to right, each consecutive pair of upper envelope vertices is also connected, resulting in a triangulation of the space between the upper envelope and~$d$ (see Figure~\ref{fig:dart-triangle}c). We can again use the result by Dyn et al.~\cite{DBLP:journals/cagd/DynGR93} to flip to this pseudo-triangulation by fixing the dart edges. Now consider each side of the spine extension separately. In this pseudo-triangulation, as long as the wings of~$d$ do not coincide with $l$ and $r$, we can flip the edge connecting the respective wing with the tail of~$d$, to bring that wing of $d$ closer to $l$ or $r$, one upper envelope vertex at a time. Hence, this flip sequence will create a dart triangle sharing its tip and tail with~$d$ (see Figure~\ref{fig:dart-triangle}d).

    To summarize, we can now perform the flip sequence of the former case to connect the wings of~$d$, followed by a flip sequence for the latter case (if necessary), to create a dart triangle sharing its tip and tail with~$d$.
\end{proof}

\begin{figure}[b]
    \centering
    \includegraphics{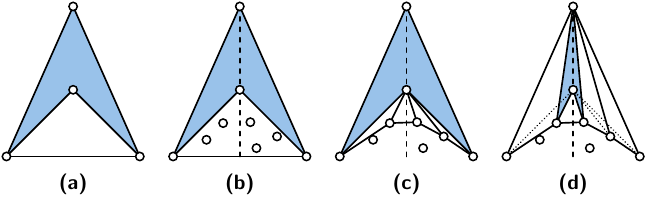}
    \caption{\textbf{\textsf{(a)}} A dart triangle. \textbf{\textsf{(b)}} A dart with an edge connecting the wings. The vertices in the bottom face are split by the extension of the (dashed) spine. \textbf{\textsf{(c)}} The specific triangulation between the dart and the upper envelop of the subset~$P'$ of points of~$P$ in the triangle formed by the wings and tail of~$d$, together with the wings of~$d$. \textbf{\textsf{(d)}} A number of flips linear in~$|P'|$ creates a dart triangle.}
    \label{fig:dart-triangle}
\end{figure}

Second, we show that for a $1$-DPT with the tail of the dart at point~$p_d\in P$, the tip can be flipped to any point~$p \in P\setminus \{p_d\}$ if this allocation of tip and tail permits a $1$-DPT of~$P$.

\begin{lemma}\label{lem:move-1-tip}
    If a point~$p_d\in P$ is the tail of the single dart in two $1$-DPTs~$\mathcal{T}_1$ and~$\mathcal{T}_2$ on point set~$P$, then there is a flip sequence between~$\mathcal{T}_1$ and~$\mathcal{T}_2$.
\end{lemma}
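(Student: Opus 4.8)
The plan is to reduce both inputs to dart-triangle form and then transport the (connected) flip graph of ordinary triangulations of $P\setminus\{p_d\}$ into the world of $1$-DPTs.

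By Lemma~\ref{lem:dart-triangle} I may assume that $\mathcal{T}_1$ and $\mathcal{T}_2$ are already dart-triangle $1$-DPTs, with tips $a_1,a_2$ and common tail $p_d$. The first ingredient is a correspondence: a dart-triangle $1$-DPT $\mathcal{T}$ with tail $p_d$ is encoded by a triangulation $\tau$ of the point set $P\setminus\{p_d\}$ together with a choice of which of the three vertices of the (unique) face $F_\tau$ of $\tau$ containing $p_d$ is the tip (the other two being the wings). Indeed, since $p_d$ has degree $2$ in $\mathcal{T}$, deleting $p_d$ and its two (tail) edges merges the dart with its small neighbouring triangle into a single triangle, leaving a triangulation $\tau$ of $P\setminus\{p_d\}$ whose face $F_\tau$ is exactly the dart triangle; conversely $\mathcal{T}$ is recovered by inserting $p_d$ into $F_\tau$ and joining it to the two wing vertices, and all three tip choices yield valid $1$-DPTs. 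In particular two dart-triangle $1$-DPTs with the same associated $\tau$ share the same dart triangle and differ, if at all, only in the tip.

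Next I would read off the flips in these terms. A flip of $\mathcal{T}$ not incident to the dart triangle is a flip of $\tau$ disjoint from $F_\tau$, and it keeps the tip; conversely each such flip of $\tau$ lifts. A flip of a dart \emph{side} edge (an edge joining the tip to a wing) is a flip of the same edge of $\tau$: it replaces $F_\tau$ by an adjacent triangle, carries $p_d$ into it, and makes the new tip the vertex of $\tau$ lying across the flipped edge. These lifts let us execute, step by step, any flip sequence $\tau_1\to\cdots\to\tau_2$ between the complement-triangulations of $\mathcal{T}_1$ and $\mathcal{T}_2$ — one exists because the flip graph of triangulations of a planar point set is connected~\cite{DBLP:journals/cagd/DynGR93} — with one caveat: when the sequence asks to flip the wing–wing edge of the current face, that edge is not a side edge, so we must first \emph{rotate} the tip (to whichever wing makes this edge a side edge) and only then flip. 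Carrying out the whole sequence brings $\mathcal{T}_1$ to a dart-triangle $1$-DPT whose associated triangulation is $\tau_2$; its dart triangle then coincides with that of $\mathcal{T}_2$, and only the tip may still be wrong, which one corrects by a final tip rotation to reach $\mathcal{T}_2$ itself (and then undoing the two applications of Lemma~\ref{lem:dart-triangle}).

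The step that is not immediate — and the main obstacle — is precisely this tip rotation: passing between the (three) dart-triangle $1$-DPTs that share a given complement $\tau$. No single flip achieves it, since the tail edges of a dart are never flippable and flipping a side edge necessarily carries $p_d$ out of $F_\tau$. I would implement it by a short, purely local detour: leave $F_\tau$ across a side edge, perform a bounded cycle of flips among the triangles incident to $F_\tau$, and re-enter $F_\tau$ with the tip reassigned. (Equivalently, one routes the global sequence so that the last flip re-entering $F_{\tau_2}$ is a side-edge flip whose opposite vertex is $a_2$; this can be forced by first maneuvering, using complement-only flips, into a configuration in which the vertex across the wing–wing edge of $F_{\tau_2}$ serves as the needed pivot.) Proving that such a detour always exists, and that it meshes with the intermediate dart triangles, is the delicate part of the argument.
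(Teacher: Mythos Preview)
Your overall plan is exactly the paper's: pass to dart-triangle form via Lemma~\ref{lem:dart-triangle}, delete $p_d$ to obtain an ordinary triangulation of $P\setminus\{p_d\}$, walk along a flip path there, and simulate each step back in the $1$-DPT world. The only difference is in how a complement-flip that touches $F_\tau$ is simulated. The paper does it uniformly: it first ``rotates'' the dart so that the edge $su$ to be flipped becomes the wing--wing edge (so that both faces along $su$ are honest triangles), then performs two flips --- first $su\to p_dt$, then the tail edge that crosses $tv$ to $tv$ --- landing in the new dart triangle. You instead split into side edges versus the wing--wing edge and assert that a side-edge flip lifts to a \emph{single} $1$-DPT flip that moves $p_d$ into the neighbouring triangle with new tip~$t$.

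That assertion is not correct in general. Flipping a tip--wing edge $su$ merges the dart with the outside triangle $stu$ into a pseudo-quadrilateral on $s,t,u,p_d,v$ whose second geodesic diagonal is the straight segment $tv$ only when the reflex notch at $p_d$ does not obstruct it; when it does, the flip produces $tp_d$ rather than $tv$, and the resulting $1$-DPT is \emph{not} in dart-triangle form (indeed $p_d$ then has degree~$3$), so your correspondence breaks at that step. One further local flip restores dart-triangle form --- but then you have essentially reproduced the paper's two-flip routine, just entered from the other side. As for the tip-rotation primitive that you single out as the main obstacle: the paper relies on the very same move (``we can rotate the dart so that the wings are incident with~$su$'') and, like you, does not spell out the flips that realize it.
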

\begin{proof}
    We may assume that the dart in~$\mathcal{T}_1$ and in~$\mathcal{T}_2$ is inside a dart triangle: if this is not the case, we apply Lemma~\ref{lem:dart-triangle} and use the respective flip sequences to $1$-DPTs with dart triangles as a pre-/suffix for the flip sequence we construct during this proof, to flip~$\mathcal{T}_1$ to~$\mathcal{T}_2$.

    Since the dart in~$\mathcal{T}_1$ and~$\mathcal{T}_2$ is in a dart triangle, we can remove~$p_d$ from the triangulations, to get two (proper) triangulations~$\mathcal{T}_1'$ and~$\mathcal{T}_2'$. We mark the faces from which we removed~$p_d$ as a special triangular faces~$f_1^*$ and~$f_2^*$. As the flip graph for triangulations is connected~\cite{DBLP:journals/dm/Lawson72}, there is a flip sequence transforming~$\mathcal{T}_1'$ into~$\mathcal{T}_2'$. We can use the exact same flip sequence to transform~$\mathcal{T}_1$ into~$\mathcal{T}_2$, as long as the edges incident with a special face do not participate in a flip. What remains is to show that in an intermediate triangulation~$T_i'$, whenever an edge incident with the special face~$f^*$ participates in a flip, we can ensure the following: There is a flip sequence of $1$-DPTs that starts from pseudo-triangulation~$\mathcal{T}_i$, which is obtained by putting~$p_d$ back into the special face~$f^*$ of~$\mathcal{T}_i'$, before the flip involving~$f^*$ and ends with $p_d$ in a dart triangle corresponding to the (special) face after the flip involving~$f^*$. 
    
    Such a flip sequence can be seen in Figure~\ref{fig:dart-triangle-flips} and always exists for the following reason. Let the points $s,t,u,v$ be corner points of the quadrilateral affected by the edge flip(s), and let $su$ be the diagonal that is present initially. Since~$f^*$ is a dart triangle, we can rotate the dart so that the wings are incident with~$su$. The flip then involves two triangular faces in which we swap from one diagonal to the other. Since we assume general position, the tail~$p_d$ cannot be collinear with $t$ and $v$ and hence one of the edges between~$p_d$ and a wing point crosses~$tv$. This edge can flip to~$tv$ to finish the flip sequence.
\end{proof}

\begin{figure}[t]
    \centering
    \includegraphics{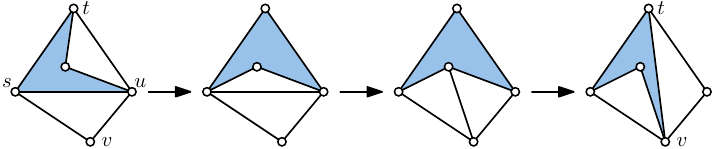}
    \caption{A flip sequence to flip an edge incident with the face containing the tail~$p_d$ of a dart.}
    \label{fig:dart-triangle-flips}
\end{figure}

The flip sequence of Lemma~\ref{lem:move-1-tip} implies that all $1$-DPTs in the class~$\mathcal{G}_p$ of $1$-DPTs that have the tail of the dart located at~$p$ reside in the same connected component of the flip graph. As the third and final step in finding the connected components of the flip graph for $1$-DPTs on~$P$, we still have to check whether the tail of the dart can move from one vertex~$p\in P$ to $q\in P$, i.e., whether the $1$-DPTs in $\mathcal{G}_p$ and $\mathcal{G}_q$ all reside in the same connected component. To do so, we consider, for each pair of vertices $p,q\in P$, each triple of points in $P$ distinct from $p$ and $q$ and check whether no other points are located in the faces of any of the small configurations in Figure~\ref{fig:move-tail}a-d. Observe that these five vertices must admit two overlapping darts with different tails. If this is the case, then by Lemma~\ref{lem:move-1-tip} we can flip to the 1-DPTs where the respective darts have their tips in the position prescribed in the small configuration, and perform the flip in the configuration to swap the tails. To complete the argument, we can use a finite case analysis to prove that we have to check only the configurations in Figure~\ref{fig:move-tail}a-d; no other ways to swap tails in 1-DPTs exist.

\begin{lemma}\label{lem:flip-tail-configs}
    There exist exactly four configurations of five points, that allow a dart to move its tail with an edge flip, as illustrated in Figure~\ref{fig:move-tail}a-d.
\end{lemma}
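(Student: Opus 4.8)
The plan is to treat a tail-moving flip as a purely local event and then run a finite case analysis on the few points involved. First I would make the observation preceding the lemma precise: if a flip carries a $1$-DPT $\mathcal{T}_1$ with dart $D_1$ (tail $p$) to a $1$-DPT $\mathcal{T}_2$ with dart $D_2$ (tail $q\neq p$), then $D_1$ and $D_2$ are two overlapping darts with distinct tails sharing exactly three vertices, so that only the five points $\{p,q\}\cup(V(D_1)\cap V(D_2))$ take part. The two sub-claims to establish are: (a) $D_1$ and $D_2$ do not have the same four vertices, because a dart forces its tail strictly inside the triangle spanned by its other three vertices, and two darts on one set of four points with different tails would contradict general position; and (b) $q\in V(D_1)$ and $p\in V(D_2)$ --- otherwise, since $D_1$ and its notch are empty faces and since the outer triangle of $D_2$ is contained in that of $D_1$ (the tail of $D_1$ lies inside the latter), the point $q$ would be forced into a face of $\mathcal{T}_1$. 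Away from these five points $\mathcal{T}_1$ and $\mathcal{T}_2$ agree, so the faces of the local picture contain no other point of $P$; this is exactly the emptiness condition drawn in Figure~\ref{fig:move-tail}a-d.

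Next I would reduce to order types of five points. A dart needs a point strictly inside a triangle of three others, so no dart exists if the five points are in convex position, and if the hull has size four then only one point lies inside a triangle of the others, ruling out two darts with distinct tails. Hence the five points are a triangle $\triangle uvw$ with $p$ and $q$ in its interior, and up to relabeling and reflection only a handful of order types survive; I would index them by the cell of the arrangement of the three lines $pu,pv,pw$ inside $\triangle uvw$ that contains $q$, refining by the analogous data for $q$ where necessary.

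For each surviving order type I would list the admissible pairs $(D_1,D_2)$: by (b), $V(D_1)\cap V(D_2)=\{p,q,z\}$ for the remaining shared point $z$, $D_1$ has an outer triangle using $q$ and $D_2$ one using $p$, leaving only a few choices of tip and wings for each dart. For each such combination I would check, using only emptiness of the faces in the local picture, whether a single flip --- in the appropriate, possibly degenerate, pseudo-quadrilateral --- transforms a $1$-DPT containing $D_1$ into one containing $D_2$. The claim is that exactly four combinations pass, realizing precisely the pictures of Figure~\ref{fig:move-tail}a-d, and conversely each of those four does admit such a flip; checking that the four are pairwise inequivalent under relabeling and reflection then gives the count.

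The main obstacle is the exhaustiveness of this case distinction: one must list all order types of five points with a triangular hull and two interior points, and within each handle every assignment of the five points to the roles tip/wing/shared-vertex for both darts, without overlooking the degenerate pseudo-quadrilaterals that occur when a dart is glued to its own notch (so that two faces share two edges and $p$ has degree two). A secondary, bookkeeping-type task is to verify that the conditions pictured in Figure~\ref{fig:move-tail} are precisely the ones needed --- that every such local picture with empty faces extends to a genuine $1$-DPT of $P$, and that no drawn condition is redundant.
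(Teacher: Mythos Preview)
Your plan is the same case analysis the paper carries out --- reduce to five points with a triangular hull and two interior points, then enumerate the admissible pairs of darts --- but the two executions differ in where they are careful. The paper's key shortcut, which you do not use, is that five points in general position with exactly three on the convex hull realize a \emph{single} order type; the paper therefore fixes one concrete embedding and never needs your indexing by cells of the arrangement of $pu,pv,pw$. Conversely, you are explicit about something the paper simply assumes: that the three vertices common to $D_1$ and $D_2$ must include both tails $p$ and $q$ (your claim~(b)). The paper only ever examines shared triangles ``incident with the two points inside the convex hull'' and does not argue why other shared triangles are impossible. Your justification for~(b) as written (``the outer triangle of $D_2$ is contained in that of $D_1$'') is not quite right --- there is no such containment in general --- but once you know there is a unique order type, the missing cases are few and can be dispatched by the same direct inspection you already plan for the main cases. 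In short: your outline is sound and in places more scrupulous than the paper's own proof; the one simplification you are missing is that your ``handful of order types'' collapses to one.
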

\begin{proof}
    First observe that for two darts with different tail vertices to exist in a size-5 point set, two points have to be located in the convex hull of the other points. Hence we consider only configurations with two points inside a triangular convex hull and all such configurations have the same order type. Therefore, w.l.o.g. assume the points are embedded as in Figure~\ref{fig:move-tail}.

    We first assume that the triangle shared by the darts is incident with the two points inside the convex hull and the topmost point on the convex hull. In this case, the right point inside the convex hull can be a dart by having the spine pointing towards the topmost point, or towards the other point inside the convex hull. Similarly the left point inside the convex hull has symmetric options for its spine. The four combinations of these options result the configurations in Figure~\ref{fig:move-tail}a-d, and allow the tails of the darts to swap between the points inside the convex hull with a single edge flip.

    Now consider any other triangle incident with the two points inside the convex hull, for example the blue triangle in Figure~\ref{fig:move-tail}e. One of the points inside the convex hull can create a dart with a spine pointing towards the other point in the convex hull, or towards the third point in the shared triangle (the left point in Figure~\ref{fig:move-tail}e). For the other point inside the convex hull neither of these options work out, since they require one of the wing vertices to be located in an area that does not contain a point (given the chosen point embedding/order).
\end{proof}
Lemmata~\ref{lem:move-1-tip} and~\ref{lem:flip-tail-configs} together result in the following theorem.
\begin{figure}[b]
    \centering
    \includegraphics{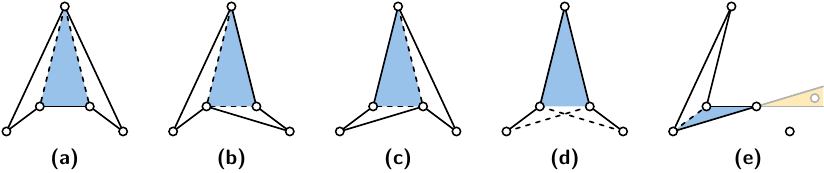}
    \caption{\textbf{\textsf{(a)-(d)}} All dart configurations of five points that allow us to move the tail of a dart using an edge flip. The darts share the blue triangle and each require one dashed edge. \textbf{\textsf{(e)}} Other triangles cannot use both middle vertices as tails, without a (non-existing) point in the yellow area.}
    \label{fig:move-tail}
\end{figure}
%
\begin{theorem}
For 1-DPTs on a set~$P$ of $n$ points with $h$ points on the convex hull, there are at most $n-h$ components in the flip graph. The exact components can be determined by checking every quintuple of points that have a triangular convex hull.
\end{theorem}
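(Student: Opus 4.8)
The plan is to derive both assertions from the three lemmata by sorting the $1$-DPTs according to the location of their unique dart tail. First I would record that in any $1$-DPT the tail of the dart is an interior pointed vertex, hence one of the $n-h$ points of $P$ not on the convex hull. This yields a partition of the $1$-DPTs of $P$ into classes $\mathcal{G}_p$, one for each such interior point $p$, and by Lemma~\ref{lem:move-1-tip} every $\mathcal{G}_p$ is contained in a single connected component of the flip graph. Consequently each component is a union of classes, so the flip graph has at most $n-h$ components, which settles the first sentence.

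For the second sentence I would introduce an auxiliary graph $H$ whose vertices are the interior points $p$ with $\mathcal{G}_p\neq\emptyset$ and in which $pq$ is an edge precisely when a single flip turns some $1$-DPT of $\mathcal{G}_p$ into some $1$-DPT of $\mathcal{G}_q$. The key claim is that $\mathcal{G}_p$ and $\mathcal{G}_q$ lie in the same component of the flip graph if and only if $p$ and $q$ lie in the same component of $H$; granting this, the components of the flip graph coincide with the components of $H$. The "if" direction follows by walking along a $p$--$q$ path in $H$ and, for each of its edges, gluing the witnessing single flip to the connecting flip sequences supplied by Lemma~\ref{lem:move-1-tip}. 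For the "only if" direction I would take a flip sequence from a $1$-DPT of $\mathcal{G}_p$ to one of $\mathcal{G}_q$, cut it into the maximal runs during which the tail stays put, and observe that each transition between two consecutive runs is a flip (between two $1$-DPTs) that relocates the tail, hence an edge of $H$; reading off the tail positions of the successive runs then gives a walk from $p$ to $q$ in $H$.

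It remains to compute the edge set of $H$, which is exactly where Lemma~\ref{lem:flip-tail-configs} is used: a single flip in a $1$-DPT of $P$ moves the dart tail from $p$ to $q$ if and only if $P$ contains a quintuple $\{p,q,r,s,t\}$ whose convex hull is the triangle $rst$, positioned as one of the four configurations of Figure~\ref{fig:move-tail}a--d with $p$ and $q$ as its two interior points, and whose faces contain no further point of $P$ (equivalently, the interior of triangle $rst$ contains no point of $P$ besides $p$ and $q$). If such a quintuple exists, one triangulates $P\setminus\{p,q\}$ so that $rst$ is a face, fills that face with the configuration, and flips as in Lemma~\ref{lem:flip-tail-configs} to obtain adjacent $1$-DPTs in $\mathcal{G}_p$ and $\mathcal{G}_q$; conversely, Lemma~\ref{lem:flip-tail-configs} asserts these four are the \emph{only} local pictures in which a flip can move a tail, so any tail-relocating flip exhibits one of them, and since the two faces it acts on belong to an actual $1$-DPT of $P$ the surrounding triangle is indeed free of other points of $P$. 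Hence the components of $H$, and thus of the flip graph, are obtained by examining every quintuple of $P$ with a triangular convex hull, proving the second sentence.

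The step I expect to be the main obstacle is precisely the "only if" half of the key claim combined with the completeness content of Lemma~\ref{lem:flip-tail-configs}: one must be certain that \emph{every} flip of a $1$-DPT of $P$ that changes the tail's location is captured by one of the four five-point configurations and the emptiness condition, i.e.\ that no large-scale tail move — through a big pseudo-quadrilateral or one concealing extra points of $P$ in its faces — can occur, so that the graph $H$ really records all transitions between classes. A secondary, minor nuisance is the bookkeeping for empty classes $\mathcal{G}_p$: such a $p$ is automatically an isolated vertex of $H$, since no flip can create a $1$-DPT with tail at $p$, so these do not spawn spurious components.
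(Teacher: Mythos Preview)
Your proposal is correct and follows essentially the same route as the paper: partition by tail location, use Lemma~\ref{lem:move-1-tip} to collapse each class to a single component (giving the $n-h$ bound), and then use Lemma~\ref{lem:flip-tail-configs} to decide which classes merge via the five-point configurations. Your auxiliary graph $H$ makes explicit what the paper leaves implicit, but the logic is the same. Regarding your stated obstacle: in a $1$-DPT every face is a triangle or a dart, so any flip acts on a pseudo-quadrilateral with at most five vertices and empty faces; hence no ``large-scale'' tail move can occur, and Lemma~\ref{lem:flip-tail-configs} genuinely exhausts the possibilities.
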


\section{Characterizing the Flip Graph for the Double Chain}

In this section we consider the double chain, a point set consisting of a convex $4$-gon being the hull of two concave chains of points next to opposite edges of the $4$-gon, $P_\asymp = P_1 \cup P_2$, such that these concave chains do not cross the diagonals of the $4$-gon, see Figure~\ref{fig:concave-chains}a. 
We can completely characterize the flip graph of $k$-DPTs on~$P_\asymp$, for any possible~$k$. 

A $k$-DPT on a point set~$P_\asymp$ admits two kinds of darts, \emph{aligned darts} for which the spine connects two adjacent vertices of one concave chain, and \emph{crossing darts} which have tip and tail in opposite concave chains. In this section we prove that the tail of a dart cannot swap between concave chains, and we say that a dart is \emph{designated} to the chain where the tail is located. Additionally, we show that we can use edge flips to flip any $k$-DPT of an instance~$P_\asymp$ to a \emph{canonical} $k$-DPT. In such a $k$-DPT, all darts are aligned, and flipped as far left as possible (see Figure~\ref{fig:concave-chains}b); for darts designated to~$P_1$ the wings on the opposite chain are at the leftmost point on~$P_2$, while for darts designated to~$P_2$ the analogous wings are at the rightmost tail on~$P_1$. Furthermore, all wings inside the convex hulls of their designated chains are located at the rightmost point of the chain. By analyzing the number of such canonical $k$-DPTs, we will analyze the number of connected components of the flip graph.

\begin{figure}
    \centering
    \includegraphics{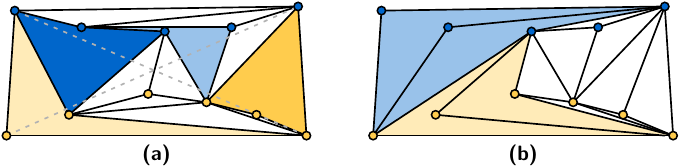}
    \caption{\textbf{\textsf{(a)}} A $k$-DPT for concave chains $P_1$ (blue) and $P_2$ (yellow), with $k=4$. The aligned and crossing darts are light and dark colored, respectively. Note that $P_1$ and $P_2$ do not cross the dashed-grey diagonals. \textbf{\textsf{(b)}} The canonical $k$-DPT of \textbf{\textsf{(a)}}; white faces are triangulated arbitrarily.}
    \label{fig:concave-chains}
\end{figure}

We first prove a few properties of aligned and crossing darts using geometric observations. 

\begin{lemma}\label{lem:aligned-dart}
    In a $k$-DPT of point set~$P_\asymp$, any aligned dart has one wing on the opposite concave chain, and for any choice of wings on the respective concave chains, a dart exists.
\end{lemma}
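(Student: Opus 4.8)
The plan is to reduce both assertions to a single structural fact about the double chain and then dispatch the rest with routine point-set geometry. Assume general position and, by symmetry, that the dart in question is designated to $P_1$; write $P_1=(p_0,p_1,\dots,p_m)$ in the order along the chain, so that the spine of any aligned dart designated to $P_1$ is a chain edge $[p_ip_{i+1}]$, and let $\ell$ be its line. Let $c$ be the intersection of the two diagonals of the convex $4$-gon.

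\textbf{Fact.} \emph{For any two points $p_a,p_b\in P_1$, every point of $P_2$ lies strictly on the side of $\ell(p_a,p_b)$ opposite to the hull edge $[p_0p_m]$.} To prove this, first observe that, since $P_1$ does not cross either diagonal, $\mathrm{conv}(P_1)$ is contained in the sub-triangle $\Delta_1$ that the two diagonals cut off of the $4$-gon along $[p_0p_m]$; its third vertex is $c$. Likewise $\mathrm{conv}(P_2)\subseteq\Delta_2$, the opposite sub-triangle, which also has $c$ as a vertex, and $\Delta_1\cap\Delta_2=\{c\}$. Now $p_a,p_b$ are interior points of $\Delta_1$, so $\ell(p_a,p_b)$ is not a supporting line of $\Delta_1$; since $[p_0p_m]\subseteq\mathrm{conv}(P_1)$ already lies on one side of $\ell(p_a,p_b)$, the apex $c$ must lie strictly on the other side. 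Travelling along each diagonal from its $P_1$-endpoint (which lies on the $[p_0p_m]$-side) through $c$, its $P_2$-endpoint lies beyond $c$, hence also strictly on the side opposite $[p_0p_m]$; so $\Delta_2$, and with it $P_2$, lies strictly on that side.

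Granting the Fact, the first assertion is immediate: applying it with $a=i,\ b=i+1$ shows that $P_2$ lies strictly on one side of $\ell$, while (by concavity of the chain, $[p_ip_{i+1}]$ being an edge of $\mathrm{conv}(P_1)$) $P_1\setminus\{p_i,p_{i+1}\}$ lies strictly on the other; in any dart the tail is reflex, so its two wings lie strictly on opposite sides of the spine line, hence exactly one wing lies in $P_2$ and the other in $P_1$. For the second assertion, fix a chain edge $[p_ip_{i+1}]$ as spine and choose any $w_1\in P_1\setminus\{p_i,p_{i+1}\}$ and any $w_2\in P_2$; such a choice yields a dart iff one of $p_i,p_{i+1}$ lies in the interior of the triangle on the other three points (it is then the tail, the remaining spine vertex the tip, $w_1,w_2$ the wings, and $[p_ip_{i+1}]$ the spine). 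The triangle $\Delta(p_i,p_{i+1},w_2)$ lies in the closed half-plane of $\ell$ containing $w_2$, so cannot contain $w_1$; symmetrically $w_2\notin\Delta(p_i,p_{i+1},w_1)$, so if one of the four is interior to the triangle of the others it is $p_i$ or $p_{i+1}$. It remains to rule out that the four points are in convex position; if they were, then --- $w_1$ and $w_2$ being separated by the line through $p_i,p_{i+1}$ --- the pair $\{p_i,p_{i+1}\}$ would be a diagonal of that quadrilateral, forcing $[p_ip_{i+1}]$ and $[w_1w_2]$ to cross. They do not: writing $w_1=p_j$ and assuming $j<i$ (left--right symmetry otherwise), $[p_ip_{i+1}]$ meets $[p_jw_2]$ iff $w_2$ lies in the wedge at $p_j$ spanned by the rays to $p_i$ and to $p_{i+1}$; by concavity this wedge lies strictly on the $[p_0p_m]$-side of the line $\ell(p_j,p_i)$, while the Fact (with $a=j,\ b=i$) puts $w_2$ strictly on the opposite side. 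So there is no crossing, the four points are not convex, and $p_i$ or $p_{i+1}$ is the tail of an aligned dart with wings $w_1,w_2$.

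I expect the Fact to be the crux --- in particular, establishing that the diagonals cut off the two claimed sub-triangles, making the supporting-line contradiction precise, and verifying the angular ``meets iff in the wedge'' characterisation, along with the degenerate chain edges $[p_0p_1]$ and $[p_{m-1}p_m]$ where a spine endpoint is a $4$-gon corner. The two combinatorial reductions invoked above (``four points form a dart iff one is interior to the triangle of the other three'' and ``four points are in convex position iff the relevant pair of connecting segments crosses'') are standard, as is the concavity bookkeeping.
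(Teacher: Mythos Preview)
Your proof is correct, and the overall conclusion for each part matches the paper's --- both ultimately show that the four points $p_i,p_{i+1},w_1,w_2$ are not in convex position with the reflex vertex at a spine endpoint --- but the execution differs. The paper argues part~1 in one line (any four points of a single concave chain are in convex position, so at least one wing must sit on the opposite chain) and for part~2 works with the line through the \emph{wings}: it observes that the segment $[w_1w_2]$ does not enter $\mathrm{conv}(P_1)$, hence both spine points lie on the same side of $\ell(w_1,w_2)$, which immediately forces a reflex corner at $p_i$ or $p_{i+1}$. You instead isolate a single separation Fact (all of $P_2$ lies on the non-$[p_0p_m]$ side of any line through two points of $P_1$) and reduce both assertions to it, finishing part~2 by the dual route --- working with the \emph{spine} line and a wedge argument to show $[p_ip_{i+1}]$ and $[w_1w_2]$ do not cross, which is logically equivalent to the paper's same-side claim. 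Your route is longer but more explicit: your part~1 actually pins down that \emph{exactly} one wing lies on each chain (the paper's sentence only gives ``at least one''), and your Fact makes precise the geometric content that the paper leaves at ``the edge between the wings does not cross the convex hulls of the chains.'' The edge cases you flag (spine edges incident to a $4$-gon corner) are genuine but harmless: in those cases one spine endpoint lies on $\ell(p_a,p_b)$ rather than strictly interior to $\Delta_1$, yet the supporting-line argument still goes through since the other endpoint is strictly interior.
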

\begin{proof}
    For the first part of the statement, observe that four points of a single concave chain in~$P_\asymp$ do not admit a dart: Any quadruple of points in a concave chain forms a convex set.

    For the second part of the statement, we can use a consequence of the chains hulls not crossing the diagonals: If we connect the wings of an aligned dart, which must now lie on different concave chains, this edge does not cross the convex hulls of the individual concave chains. Now, for any choice of wing vertices, consider any two consecutive points on a concave chain (as tip and tail), disjoint from the wings. If we draw a line through the wings, then the two non-wing points lie on one side of the wings. Hence the line piece between the wing points would lie completely outside of the quadrilateral of the four considered points, in which the wings are not consecutive. This (arbitrarily chosen) quadrilateral must hence have one concave corner at one of the non-wing points, resulting in a aligned dart.
\end{proof}

\begin{lemma}\label{lem:crossing-dart}
    In a $k$-DPT of point set~$P_\asymp$, the wings and tail of any crossing dart are consecutive on one concave chain; for any choice of tip on the opposite chain, a dart exists.
\end{lemma}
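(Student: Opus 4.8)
The plan is to mirror the proof of Lemma~\ref{lem:aligned-dart}, using the two structural features of $P_\asymp$ that were already exploited there: each chain $P_1,P_2$ is in convex position, and any segment joining a point of $P_1$ to a point of $P_2$ meets $\mathrm{conv}(P_1)$ only at its $P_1$-endpoint and $\mathrm{conv}(P_2)$ only at its $P_2$-endpoint (so, in particular, $P_1$ and $P_2$ are separated by a line). I will also use the elementary fact that the tail (reflex vertex) of any dart lies strictly inside the triangle spanned by its tip and its two wings, since those three vertices are exactly the convex corners of the pseudo-quadrilateral.

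For the first part, take a crossing dart $a$--$w_1$--$b$--$w_2$--$a$ with tip $a$, tail $b$ and wings $w_1,w_2$, say with $a\in P_2$ and $b\in P_1$. The triangle $\triangle(a,w_1,w_2)$ contains $b$, which lies strictly on $P_1$'s side of the separating line; hence the triangle does not lie entirely on $P_2$'s side, so at least one wing, say $w_2$, lies on $P_1$. To exclude $w_1\in P_2$, I use that the dart is an empty face: if $w_1\in P_2$, then $a$ and $w_1$ must be consecutive on $P_2$ (otherwise a point of $P_2$ lying between them on the concave chain would fall on the interior side of the dart edge $\overline{aw_1}$), and symmetrically $b$ and $w_2$ must be consecutive on $P_1$; a short case check then shows that with these adjacencies the crossing edge $\overline{w_1 b}$ cannot avoid enclosing a vertex of $P_1$ inside the dart --- a contradiction, so $w_1\in P_1$. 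Thus the two wings lie on the chain of the tail, and consecutiveness of $b,w_1,w_2$ along that chain follows in the same spirit: if a vertex $v$ of $P_1$ lay strictly between $w_1$ and $w_2$ along the chain, concavity would put $v$ on the $a$-side of the chord $\overline{w_1w_2}$ and hence inside $\triangle(a,w_1,w_2)$, while the fact that $v,b,w_1,w_2$ are four points of $P_1$ in convex position keeps $v$ outside the bite $\triangle(b,w_1,w_2)$; so $v$ would lie in the interior of the dart, which is impossible.

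For the existence part, fix three consecutive points $u_-,u,u_+$ of a chain, say of $P_1$, with $u$ the one nearer to $P_2$, and an arbitrary tip $a\in P_2$; I claim $a$--$u_-$--$u$--$u_+$--$a$ is a dart with tail $u$. Concavity of the chain puts $u$ strictly on the $a$-side of the line $\overline{u_-u_+}$ --- otherwise the segment from $a$ to $u$ would pierce $\triangle(u_-,u,u_+)\subseteq\mathrm{conv}(P_1)$, contradicting the non-crossing property --- and the non-crossing property also makes $a$ see $u_-,u,u_+$ in this cyclic order, so the ray $\overrightarrow{au}$ lies strictly between $\overrightarrow{au_-}$ and $\overrightarrow{au_+}$; together these place $u$ strictly inside $\triangle(a,u_-,u_+)$, so the quadrilateral is a simple polygon with unique reflex vertex $u$, i.e., a dart. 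Emptiness is then routine: no further vertex of $P_1$ lies in $\triangle(a,u_-,u_+)$ because $a$ sees the chain monotonically, and no further vertex of $P_2$ lies there because $\mathrm{conv}(P_2)$ and the triangle sit on opposite sides of the dart edges $\overline{au_-}$ and $\overline{au_+}$. The step I expect to be the main obstacle is the exclusion of exactly one wing on the opposite chain: the separating line only handles the easy ``both wings'' case, so the ``one wing'' case must be closed by a careful local argument combining the convex position of each chain with the non-crossing hypothesis to force a point of $P_\asymp$ into the dart.
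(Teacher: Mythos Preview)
Your approach is the same as the paper's---use convex position of each chain together with the property that cross-chain segments do not enter either chain's convex hull---and your arguments for consecutiveness and for the existence part are essentially what the paper does (the paper is terser: it just says the wing-to-tip edges ``do not cross the convex hulls of the individual concave chains, [so] the vertices between tail and wing will be encompassed'').

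The one place where you genuinely diverge is the configuration with one wing on each chain. You are right that this case must be excluded, and you are also right that your sketch for it (``$a,w_1$ consecutive on $P_2$, $b,w_2$ consecutive on $P_1$, then a short case check'') is not yet a proof: a point of $P_2$ between $a$ and $w_1$ lies on the interior side of the edge $\overline{aw_1}$, but that alone does not place it inside the dart, so your consecutiveness claims are not justified, and the promised case check is missing entirely. The paper, for its part, folds this into the single sentence ``the tail must be between the wings on the concave chain, otherwise the angle between the edges towards the wings is not concave,'' which is also rather quick.

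Here is a clean way to close the gap that fits your framework. Suppose $a,w_1\in P_2$ and $w_2\in P_1$. The two edges of $\triangle(a,w_1,w_2)$ incident to $w_2$ are cross-chain segments and therefore meet $\mathrm{conv}(P_1)$ only at $w_2$; the third edge $\overline{aw_1}$ lies in $\mathrm{conv}(P_2)$, which is disjoint from $\mathrm{conv}(P_1)$. Hence the boundary of the triangle meets $\mathrm{conv}(P_1)$ only at the single point $w_2$. Since $\mathrm{conv}(P_1)\setminus\{w_2\}$ is connected and contains a vertex of the outer $4$-gon (hence a point outside the triangle), it lies entirely outside $\triangle(a,w_1,w_2)$. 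In particular $b\in P_1\setminus\{w_2\}$ is not in the interior of this triangle, so $b$ cannot be the reflex vertex of a dart with corners $a,w_1,w_2$. This rules out the split-wing configuration directly, without any case check.
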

\begin{proof}
    First observe that the tail must be between the wings on the concave chain, otherwise the angle between the edges towards the wings is not concave. Now, for the first part of the statement, assume for contradiction that the wings and tail of a crossing dart are not consecutive. Since the edges from the wings to the tip do not cross the convex hulls of the individual concave chains, the vertices between tail and wing will be encompassed by the edges of the dart.

    For the second part of the statement, we again use that an edge connecting vertices on opposite chains does not cross the convex hulls of the chains: When the wings and tail are consecutive, the wings can connect to any vertex on the opposite chain, and the resulting quadrilateral will not encompass any points, and will hence be a crossing dart
\end{proof}

\newpage
\begin{lemma}\label{lem:concave-darts}
    A $k$-DPT of point set~$P_\asymp$ admits only crossing and aligned darts.
\end{lemma}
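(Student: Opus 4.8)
The plan is to prove the following sharper statement: every dart whose spine does \emph{not} join two consecutive vertices of one chain is a crossing dart. Since a dart whose spine \emph{does} join two consecutive vertices of one chain is by definition aligned, this gives the claim. I would rely on three facts, all coming from the chains not crossing the diagonals of the $4$-gon (and already used implicitly in Lemmas~\ref{lem:aligned-dart} and~\ref{lem:crossing-dart}): (a) any subset of one chain is in convex position, so---a dart being a pseudo-triangle and hence having a reflex vertex---no dart has all four vertices on one chain, whence every dart has a vertex on each chain; (b) $\mathrm{conv}(P_1)$ and $\mathrm{conv}(P_2)$ meet in at most the crossing point of the two diagonals, which is not a point of $P_\asymp$; (c) consequently every segment between a vertex of $P_1$ and a vertex of $P_2$ stays in the closed region between the chains, meeting $\mathrm{conv}(P_i)$ only in its endpoint on $P_i$, and in particular no edge of the $k$-DPT---a single-chain chord, or a cross-chain segment---can cross a boundary edge of $\mathrm{conv}(P_1)$ or of $\mathrm{conv}(P_2)$.

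Next I would record one observation about the structure of a $k$-DPT of $P_\asymp$. In a $k$-DPT every non-dart face is a triangle, which has no interior diagonal, whereas a dart has a unique interior diagonal, namely its spine. Combining this with~(c): a boundary edge $e$ of $\mathrm{conv}(P_1)$ that is not an edge of the $k$-DPT crosses no edge of it (by~(c), together with the fact that a chord of a convex polygon cannot cross one of its boundary edges), so the relative interior of $e$ lies in the interior of a single face $f$ and $e$ joins two corners of $f$; hence $f$ must be a dart and $e$ must be its spine. The same holds for $\mathrm{conv}(P_2)$. (The two boundary edges of $\mathrm{conv}(P_i)$ that are edges of the $4$-gon are hull edges of $P_\asymp$, hence always present, so this really concerns only the chain edges.)

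Now let $\delta$ be a dart whose spine is not a chain edge; I would show $\mathrm{int}(\delta)$ is disjoint from $\mathrm{conv}(P_1)$, and by symmetry from $\mathrm{conv}(P_2)$. By~(a), $\delta$ has a vertex $b\in P_2$; since $\mathrm{conv}(P_1)$ is closed and $b\notin\mathrm{conv}(P_1)$ by~(b), the points of $\mathrm{int}(\delta)$ near $b$ lie outside $\mathrm{conv}(P_1)$. If $\mathrm{int}(\delta)$ also met $\mathrm{conv}(P_1)$, then, $\mathrm{int}(\delta)$ being connected, it would contain a point $z$ of $\partial\,\mathrm{conv}(P_1)$, i.e.\ of a $4$-gon edge or of a chain edge of $\mathrm{conv}(P_1)$. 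A $4$-gon edge is an edge of the $k$-DPT and cannot meet the open face $\mathrm{int}(\delta)$; a chain edge is either an edge of the $k$-DPT (same contradiction) or, by the observation above, the spine of some dart $f$, in which case $z$ (being distinct from the two data points bounding the chain edge) lies in the relative interior of that spine, hence in $\mathrm{int}(f)$; as faces have pairwise disjoint interiors, $f=\delta$, contradicting that the spine of $\delta$ is not a chain edge. Hence $\mathrm{int}(\delta)\cap\mathrm{conv}(P_1)=\emptyset$.

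To finish: the spine of $\delta$ is a non-degenerate segment whose relative interior lies in $\mathrm{int}(\delta)$, hence---by the previous step---avoids both $\mathrm{conv}(P_1)$ and $\mathrm{conv}(P_2)$; so the spine is not a chord of $\mathrm{conv}(P_1)$ nor of $\mathrm{conv}(P_2)$, which forces its endpoints---the tip and the tail of $\delta$---not to lie both on $P_1$ and not both on $P_2$. Therefore tip and tail lie on opposite chains, i.e.\ $\delta$ is a crossing dart, completing the proof. I expect the main work to be the third paragraph: making precise that a $k$-DPT of $P_\asymp$ respects the partition of the $4$-gon into $\mathrm{conv}(P_1)$, $\mathrm{conv}(P_2)$, and the region between the chains, the one subtlety being that an aligned dart straddles a single chain edge---which is exactly why that case is peeled off at the very start.
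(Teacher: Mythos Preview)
Your argument is correct, but it takes a genuinely different route from the paper. The paper argues directly by contradiction: if the spine joined two non-consecutive points of the same chain (say $P_1$), then one wing lies on $P_2$, and the two dart edges from tip and tail to that wing (being cross-chain segments) do not enter $\mathrm{conv}(P_1)$; consequently any chain point between tip and tail on $P_1$ is trapped inside the dart, contradicting that faces of a pseudo-triangulation contain no points of $P_\asymp$ in their interior. That is the whole proof --- four or five lines.

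Your approach is more structural: you first show that any chain edge not present in the $k$-DPT must be the spine of \emph{some} dart, and then use a connectedness argument to prove that the interior of any dart whose spine is not a chain edge is disjoint from both $\mathrm{conv}(P_1)$ and $\mathrm{conv}(P_2)$; the conclusion about the spine follows. This is longer but yields a stronger intermediate statement (non-aligned darts live entirely in the strip between the two chains), which could be useful elsewhere. The paper's proof, by contrast, is quicker and needs no topological reasoning --- it just exhibits a trapped point.

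One small slip: you write ``the two boundary edges of $\mathrm{conv}(P_i)$ that are edges of the $4$-gon'', but there is only one such edge (the $4$-gon edge adjacent to the chain $P_i$; the other boundary edges of $\mathrm{conv}(P_i)$ are the $|P_i|-1$ chain edges). This does not affect your argument.
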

\begin{proof}
    Assume for contradiction that $P_\asymp$ admits a dart with its spine not connecting two points on opposite concave chains, and not connecting two consecutive points on a single concave chain. Also, as any quadruple of points in a concave chain forms a convex set, one wing of this dart must lie on the opposite concave chain. Now consider the two edges from the tip and tail to the wing on the opposite chain. These edges will not cross the convex hull of the concave chain that contains the tip and tail.

    Since the spine does not connect consecutive vertices on the concave chain, there is at least one point~$p$ between the tip and tail on the concave chain. Therefore the point $p$ will be located inside the considered dart, because the edges to the wing on the opposite chain do not cross the convex hull of either chain, and hence encompass~$p$, leading to a contradiction.
\end{proof}

Again using quintuples of points, we prove that tails cannot swap between concave chains.

\begin{lemma}\label{lem:concave-no-tail-move}
    In any $k$-DPT of point set~$P_\asymp = P_1 \cup P_2$ the tail of a dart cannot swap between $P_1$ and $P_2$ through an edge flip.
\end{lemma}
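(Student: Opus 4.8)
The plan is to reduce the statement to the five-point classification of Lemma~\ref{lem:flip-tail-configs} and then, using the structure of aligned and crossing darts, to rule out each of the four admissible configurations. Suppose for contradiction that some edge flip in a $k$-DPT of $P_\asymp$ moves the tail of a dart from a point $x\in P_1$ to a point $y\in P_2$ (so $x$ is the tail of a dart in the $k$-DPT before the flip and $y$ the tail of a dart in the $k$-DPT after it). By Lemma~\ref{lem:flip-tail-configs}, the five points directly involved in the flip realize one of the configurations of Figure~\ref{fig:move-tail}(a)--(d); there $x$ and $y$ are the two points inside the triangular convex hull of the remaining three, and I write $U$ for the apex (the ``topmost'' point of the convex hull in the figure), so that the shared ``blue'' triangle is $\triangle Uxy$. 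In each of the four configurations $\triangle Uxy$ is one of the two triangles into which the spine of the dart with tail $x$ cuts that dart, and also one of the two into which the spine of the dart with tail $y$ cuts that dart. Since the spine of the dart with tail $x$ runs from $x$ to its tip, and its tip is either $U$ or $y$ (which of the two is exactly what the four configurations enumerate), whichever of $U,y$ is not this tip is a wing of that dart; symmetrically, whichever of $U,x$ is not the tip of the dart with tail $y$ is a wing of that dart.

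Next I would bring in the dichotomy of Lemmas~\ref{lem:aligned-dart}, \ref{lem:crossing-dart} and~\ref{lem:concave-darts}, which applies to both darts since each lives in a $k$-DPT of $P_\asymp$: a dart with tail on a chain $P_i$ is either aligned --- its tip also lies on $P_i$, and exactly one of its wings lies on the opposite chain --- or crossing --- its tip lies on the opposite chain and both of its wings lie on $P_i$. In particular, a dart with tail on $P_i$ having any wing on the opposite chain must be aligned, hence has its tip on $P_i$; and a dart whose tip and tail lie on opposite chains must be crossing. Running the four configurations through this: in (a) both spines point at $U$, so $y$ is a wing of the dart with tail $x\in P_1$ and $x$ is a wing of the dart with tail $y\in P_2$; the former dart is therefore aligned and forces $U\in P_1$, the latter is aligned and forces $U\in P_2$, which is impossible. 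In (b) the spine of the dart with tail $y$ points at $x$, so that dart has its tip on $P_1$ and its tail on $P_2$, hence is crossing, with both wings --- one of which is $U$ --- on $P_2$, while the dart with tail $x$ behaves as in (a) and forces $U\in P_1$; configurations (c) and (d) are the symmetric variants. In every case $U$ would have to lie on both chains, a contradiction.

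The only point requiring care is the middle one: reading off correctly from Figure~\ref{fig:move-tail}(a)--(d) which of the five points is the tip and which are the wings of each of the two darts, because the argument rests on exhibiting, in each configuration, a wing of one dart on $P_1$ and a wing of the other on $P_2$, both of which then pin the apex $U$ to a chain. Beyond the four configurations of Lemma~\ref{lem:flip-tail-configs} and the already-established facts about aligned and crossing darts, no additional geometric reasoning is needed.
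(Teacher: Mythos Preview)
Your argument is correct, but it takes a considerably longer route than the paper's. The paper dispenses with the four-case analysis entirely and uses only the observation underlying Lemma~\ref{lem:flip-tail-configs}: for a flip to move a tail, the old tail and the new tail must both lie strictly inside the convex hull of the remaining three involved points. It then notes that in the double chain, whenever at most two of the five chosen points come from one chain, those points sit on the convex hull of the five (they are extreme toward their own side), so no quintuple in $P_\asymp$ can have one interior point on each chain; hence $x\in P_1$ and $y\in P_2$ cannot simultaneously be interior, and the swap is impossible. Your approach instead unpacks the four configurations of Figure~\ref{fig:move-tail}(a)--(d) and, via the aligned/crossing dichotomy of Lemmas~\ref{lem:aligned-dart}--\ref{lem:concave-darts}, pins the apex $U$ to both chains in each case. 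This is sound and has the minor virtue of exercising the structural lemmas, but it is more work for the same conclusion; the paper's convex-hull argument is shorter and also more robust, since it does not rely on the exhaustiveness of the four configurations but only on the single fact that both tails are interior.
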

\begin{proof}
    In the previous section we already argued that such a flip requires two points (the tail before and after the flip) to be inside the convex hull of three other points. Thus consider the convex hull of any 5 points in $P_\asymp$. If we take two or less points of one concave chain, then these points can never be inside the convex hull of these 5 points. Thus any quintuple of points in $P_\asymp$ can only contain points from one concave chain inside its convex hull. 
\end{proof}

Next we show how to flip any $k$-DPT on an instance~$P_\asymp$ to the canonical $k$-DPT for~$P_\asymp$.

\begin{lemma}\label{lem:concave-canonical}
    Any $k$-DPT of point set~$P_\asymp$ can be flipped to the canonical $k$-DPT.
\end{lemma}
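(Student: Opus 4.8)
The plan is to transform an arbitrary $k$-DPT of $P_\asymp$ into the canonical one in three stages: (1) make every dart aligned, (2) push each aligned dart as far left as possible, and (3) normalize the triangulation of the remaining white region so that all wings interior to a designated chain sit at the rightmost vertex of that chain. Throughout, I would exploit the result of Dyn et al.~\cite{DBLP:journals/cagd/DynGR93} (already used in the proof of Lemma~\ref{lem:dart-triangle}): once we fix a set of edges $D$ consisting of the dart edges together with the corresponding spine edges, any constrained triangulation of $P_\asymp$ containing $D$ can be reached from any other by flips that never touch $D$, and such a flip sequence applied to the $k$-DPT (ignoring the spine edges, which are never moved) is a valid flip sequence of $k$-DPTs.

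For stage~(1), take any crossing dart. By Lemma~\ref{lem:crossing-dart} its wings and tail are consecutive on one chain, say $P_1$, and its tip is some vertex of $P_2$. By Lemma~\ref{lem:concave-no-tail-move} the tail stays in $P_1$, and by Lemma~\ref{lem:aligned-dart} the corresponding aligned dart designated to $P_1$ (same tail, same two neighbours on $P_1$ as wings, with the wing on $P_2$ to be chosen) exists. I would construct an explicit target constrained triangulation in which the spine of the crossing dart has been replaced by the spine of this aligned dart, fix the union of all $k$ darts' edges and spines, and invoke Dyn et al.\ to flip between the two constrained triangulations; restricting the sequence to the $k$-DPT converts the crossing dart into an aligned one while leaving the other $k-1$ darts untouched. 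Repeating this for each crossing dart (each designated chain is fixed, so the darts do not interfere) yields a $k$-DPT with only aligned darts.

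For stage~(2), with all darts aligned, each dart designated to $P_1$ occupies two consecutive vertices of $P_1$ as tip/tail and has a wing on $P_2$; ``as far left as possible'' means the tip/tail pair is the leftmost pair of $P_1$ not already used by another $P_1$-dart, and the $P_2$-wing is the leftmost vertex of $P_2$ (symmetrically for $P_2$-darts). Because darts designated to the same chain are nested along that chain in a fixed order (two aligned darts on the same chain cannot overlap on more than a shared vertex, by Lemma~\ref{lem:aligned-dart} applied to the convex position of the chain), I would move them one at a time, innermost first, again by fixing the other darts' edges and spines and using Dyn et al.\ to slide a single dart's spine to the adjacent-left pair of chain vertices (and its cross-chain wing to the extreme vertex of the opposite chain). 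Finally, for stage~(3), the darts are now all in canonical position, so I fix all $k$ dart edges and spines once more; the complement is a point set in (piecewise) convex position whose triangulations are all flip-connected by Dyn et al., so I flip to the specific constrained triangulation in which every interior-chain wing is routed to the rightmost vertex of its chain and the white faces are triangulated in the prescribed arbitrary-but-fixed way.

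The main obstacle is stage~(1): one must be careful that replacing a crossing dart by an aligned dart with the \emph{same} tail is always realizable as a constrained triangulation, i.e.\ that fixing the other $k-1$ darts does not obstruct routing the new aligned dart's cross-chain wing, and that the intermediate constrained triangulations produced by Dyn et al.\ never accidentally create or destroy a dart among the fixed ones — this is guaranteed precisely because the fixed edge set $D$ (dart edges plus spines) forces each of those $k$ faces to remain a constrained quadrilateral throughout, but spelling out that the resulting objects are genuine $k$-DPTs (no extra pointed vertices appear) requires the observation that a vertex is pointed in the $k$-DPT iff its incident spine edge lies in $D$, which is invariant under the constrained flips. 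The remaining stages are then routine applications of the same fix-and-flip scheme.
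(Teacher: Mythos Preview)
Your overall three-stage plan is reasonable and not far from the paper's structure, but the mechanism you rely on in stages~(1) and~(2) does not work as stated. The Dyn et al.\ result lets you flip between two triangulations that share a \emph{fixed} constraint set~$D$, and the translation ``remove the spines to get a $k$-DPT flip sequence'' is only valid when the darts themselves do not change. Concretely: if $D$ contains the edges and spine of the crossing dart you are trying to convert, then that dart is completely frozen (all five edges on its four vertices are constrained) and the flip sequence can never turn it into an aligned dart. If instead $D$ contains only the other $k-1$ darts' edges and spines, then at an intermediate triangulation the tail vertex~$t$ of the current dart is just an ordinary fully-triangulated interior vertex; deleting the $k-1$ fixed spines yields a $(k-1)$-DPT, not a $k$-DPT, so the sequence leaves the class you are working in. Your invariant ``a vertex is pointed iff its incident spine edge lies in~$D$'' is exactly what fails here: it holds for the $k-1$ frozen tails but says nothing about~$t$. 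The same objection applies to ``sliding a single dart's spine'' in stage~(2).

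The paper avoids this by never invoking Dyn et al.\ in this lemma. Instead it gives explicit local flips that move one dart at a time: it shows how a single diagonal flip shifts a dart's $P_2$-vertex one step left along~$P_2$ (with a short recursion when the neighbouring face has the wrong shape), how a single flip toggles a dart between crossing and aligned, and how a single flip shifts a crossing dart's tail one step left along~$P_1$. Each of these is manifestly a valid $k$-DPT flip because it is described as exchanging the diagonal of an explicit pseudo-quadrilateral, and Lemmata~\ref{lem:aligned-dart} and~\ref{lem:crossing-dart} guarantee the resulting dart is legal. Your stage~(3), where all darts are already in their final positions and only the surrounding triangulation must be normalised, is the one place where a Dyn et al.--style argument would actually go through, and indeed the paper handles that part by treating half-darts as ordinary triangles and flipping freely inside the convex hull of each chain.
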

\begin{proof}
    We first consider all darts with tails on~$P_1$ and flip them to be aligned darts on the left of the $k$-DPT. By Lemma~\ref{lem:concave-darts} these must be either crossing or aligned and we call these darts~$P_1$-darts. Afterwards we do the exact same for all darts with tails on~$P_2$ ($P_2$-darts).

    We consider the darts with wings on~$P_1$ from left-to-right, starting from dart with the leftmost tail. For each of these darts we move the tip or wing of the respectively crossing or aligned dart located on~$P_2$ to the leftmost vertex of~$P_2$. For this we consider the edges between~$P_1$ and~$P_2$. We do the following flips, until the currently considered dart becomes adjacent to the outer face, or to the previously considered dart. There are two cases:
    \begin{itemize}
        \item The face left of the currently considered dart has a single vertex on~$P_1$. In this case we can flip the edge between that face and the dart to the opposite diagonal (see Figure~\ref{fig:concave-moves-tips}). Note that this works regardless of whether the dart is crossing or aligned. It also always leads to a valid pseudo-triangulation, since Lemmata~\ref{lem:aligned-dart} and~\ref{lem:crossing-dart} tell us that any position of the dart vertex on the opposite chain is possible.

        \begin{figure}[b]
            \centering
            \includegraphics[page=1]{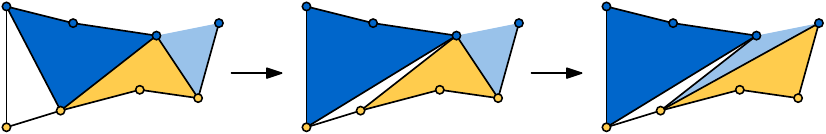}
            \caption{Moving the tip of a crossing dart, followed by moving the wing of an aligned dart.}
            \label{fig:concave-moves-tips}
        \end{figure}
        
        \item The face left of the currently considered dart has multiple vertices on~$P_1$ and the vertex on~$P_2$ is not the last one. Note that this must be a triangular (non-dart) face, as otherwise it is the previously considered dart, which should have its vertex on~$P_2$ at the leftmost point of~$P_2$. We now first flip the vertex on~$P_2$ of this triangular face to the leftmost point of~$P_2$, by recursively using the same process. Since the leftmost point on~$P_2$ was not yet reached, after the recursive procedure ends, the adjacent face must have two points on~$P_2$, and the former case applies again.
    \end{itemize}

    Once all $P_1$-darts have their vertex on $P_2$ moved all the way left, we turn them into aligned darts with their spines as far left along~$P_1$ as possible. We additionally want that the wing inside the convex hull of~$P_1$ is located at the rightmost point of~$P_1$. This works as follows (see Figure~\ref{fig:concave-moves-spines}).
    \begin{itemize}
        \item To turn a crossing dart into an aligned dart (or vice versa), simply flip the edge in the position of the aligned spine to the spine of the crossing dart (or vice versa).
        \item For a crossing dart that is as far left as possible, we have to make sure that the pseudo-triangulation inside the convex hull of~$P_1$ is such that the wing inside this convex hull is located at the rightmost point. Observe that inside this convex hull, we have triangular faces and one half of each aligned dart, which is also a triangle. We consider these half-darts as normal triangles, and flip the triangulation inside the convex hull of~$P_1$ to a triangulation where each half-dart triangle has its third, non-spine vertex located at the rightmost point of~$P_1$. By Lemma~\ref{lem:aligned-dart} we know that the wings of aligned $P_1$-darts may connect anywhere on~$P_1$, and thus the half-darts form proper darts with their other half at any point in during this flip sequence.
        \item Finally, to move the spine of a dart more towards the left, we turn a dart into a crossing dart (if it is not a crossing dart yet), and observe that the face left of the dart must be a triangular face, with two vertices on~$P_1$, otherwise we are done. We can flip the edge between this triangular face, and the dart, to the spine of the dart, to move the tail of the dart towards the left (see Figure~\ref{fig:concave-moves-crossing} for an example on a~$P_2$-dart). We do this until the leftmost position of the tail is reached, which necessarily is the leftmost point in~$P_1$ that is not on the convex hull of~$P_\asymp$, and not occupied by a previous considered dart. Here we can flip the crossing dart back to an aligned dart to finish processing this dart.
    \end{itemize}

    \begin{figure}
        \centering
        \includegraphics[page=2]{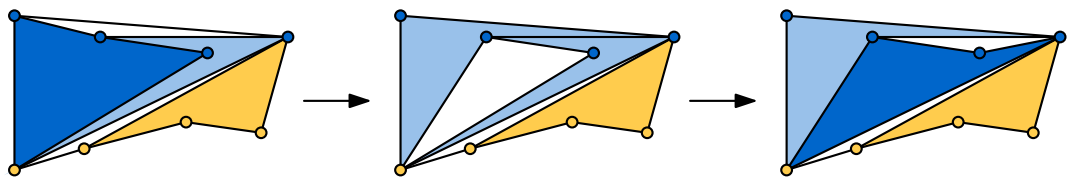}
        \caption{Flipping a crossing dart to be aligned, followed by a flip from an aligned to a crossing~dart.}
        \label{fig:concave-moves-spines}
    \end{figure}

    \begin{figure}[b]
        \centering
        \includegraphics[page=3]{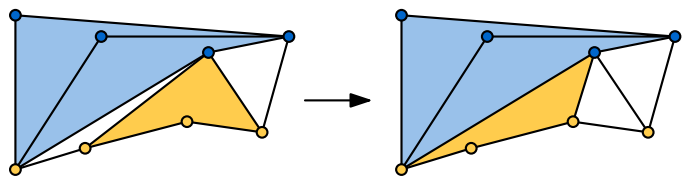}
        \caption{Moving a crossing dart designated to $P_2$ leftwards.}
        \label{fig:concave-moves-crossing}
    \end{figure}

    After the flip sequence described up to now, all~$P_1$-darts now coincide with the canonical $k$-DPT. To finish the flip sequence, find an analogous flip sequence for the $P_2$-darts, with one change: For each $P_2$-dart the vertex on $P_1$ does not flip to the leftmost vertex of~$P_1$ but to the rightmost point that is a tail of a $P_1$-dart (or equivalently, up to but not past the $P_1$-dart spines). An example of the resulting canonical $k$-DPT is shown in Figure~\ref{fig:concave-chains}b.
\end{proof}

Lemmata~\ref{lem:concave-no-tail-move} and~\ref{lem:concave-canonical} now allow us to prove the following theorem.

\begin{restatable}{theorem}{concavedistribute}\label{thm:concave-distribute}
    The number of connected components of the flip graph of $k$-DPTs on a point set~$P_\asymp = P_1 \cup P_2$ is equal to the number of ways to designate $k$ darts to $P_1$ or $P_2$.
\end{restatable}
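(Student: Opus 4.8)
The plan is to prove the theorem by a two-way counting argument in which the \emph{designation vector} of a $k$-DPT---the pair $(k_1,k_2)$ with $k_1+k_2=k$, where $k_i$ is the number of darts designated to $P_i$---turns out to be a complete invariant of the connected components. By Lemma~\ref{lem:concave-darts}, every dart of a $k$-DPT of $P_\asymp$ is aligned or crossing, and in either case its tail lies on exactly one of $P_1$ and $P_2$, so $(k_1,k_2)$ is well defined. Call a pair $(k_1,k_2)$ with $k_1+k_2=k$ \emph{admissible} if some $k$-DPT of $P_\asymp$ has that designation vector; the admissible pairs are exactly the ways to distribute the $k$ indistinguishable darts over $P_1$ and $P_2$, so the theorem asserts that the number of components equals the number of admissible pairs.

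First I would show that the designation vector is constant along every flip path, which yields the lower bound ``number of components $\ge$ number of admissible pairs''. Every edge of the flip graph joins two $k$-DPTs, so it preserves the total number $k$ of darts; and by Lemma~\ref{lem:concave-no-tail-move} a flip can never move a dart tail from one concave chain to the other, so it can change neither $k_1$ nor $k_2$. Hence two $k$-DPTs in the same component have the same designation vector, so the $k$-DPTs of a fixed admissible pair form a nonempty union of connected components, and distinct admissible pairs give disjoint such unions.

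Next I would establish the matching upper bound using Lemma~\ref{lem:concave-canonical}. Fix an admissible pair $(k_1,k_2)$ and let $T$ be any $k$-DPT with this designation vector. By Lemma~\ref{lem:concave-canonical}, $T$ flips to a canonical $k$-DPT, which by the previous paragraph still has designation vector $(k_1,k_2)$; moreover the placement of its darts (all aligned, pushed as far left as possible, all wings at the prescribed extreme vertices) is determined by $(k_1,k_2)$ alone. What remains is to connect the canonical $k$-DPTs of a fixed designation vector to one another: two such $k$-DPTs agree on all dart edges and differ only in how the ``white'' regions of Figure~\ref{fig:concave-chains}b are triangulated, and each such region is a simple polygon with no point of $P_\asymp$ in its interior. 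Keeping every dart edge (and every other edge of the canonical form outside the white regions) fixed and applying the constrained-flip theorem of Dyn et al.~\cite{DBLP:journals/cagd/DynGR93} region by region---equivalently, classical connectivity of triangulations of a simple polygon, cf.~\cite{DBLP:journals/dm/Lawson72}---any two such triangulations are joined by a sequence of flips that exchange diagonals of convex quadrilaterals and never involve a dart. Those are legal flips between $k$-DPTs, so all $k$-DPTs of a fixed admissible pair lie in a single connected component.

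Combining the two bounds, the number of components equals the number of admissible pairs, i.e., the number of ways to designate $k$ darts to $P_1$ or $P_2$ (concretely, the pairs $(k_1,k_2)$ with $k_1+k_2=k$, $0\le k_1\le|P_1|-2$ and $0\le k_2\le|P_2|-2$, each realized by the corresponding canonical $k$-DPT). I expect the crux to be the last step of the upper bound: showing that the leftover freedom in the canonical $k$-DPT---the ``arbitrarily triangulated'' white faces---does not split a component. This requires checking that those regions are free of interior points and that re-triangulating them can be carried out by dart-free flips, so that the classical connectivity of triangulation flip graphs can be invoked inside the restricted $k$-DPT flip graph; the two bounds themselves then follow directly from Lemmas~\ref{lem:concave-darts}, \ref{lem:concave-no-tail-move} and~\ref{lem:concave-canonical}.
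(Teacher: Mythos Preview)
Your proof is correct and follows the same two-step argument as the paper: the designation vector is a flip invariant by Lemma~\ref{lem:concave-no-tail-move}, and every $k$-DPT with a given designation flips to a canonical form by Lemma~\ref{lem:concave-canonical}. The paper simply asserts that the designation determines the canonical $k$-DPT uniquely and stops there, whereas you are more explicit and connect any two canonical $k$-DPTs of the same designation by re-triangulating the ``white'' regions via constrained flips; this extra step is a reasonable precaution given that Figure~\ref{fig:concave-chains}b leaves those regions ``arbitrarily'' triangulated, but otherwise the approaches coincide.
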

\begin{proof}
    By Lemma~\ref{lem:concave-no-tail-move} we know that the tails of darts cannot swap between convex chains. Furthermore, Lemma~\ref{lem:concave-canonical} tells us that all such $k$-DPTs can be flipped to a canonical $k$-DPT. Now observe that the number of darts designated to $P_1$ and $P_2$ completely determines the canonical $k$-DPT. Thus all $k$-DPTs with the same designation of darts to the concave chains are part of the same connected component of the flip graph.
\end{proof}

%

If $P_1$ and $P_2$ have $a+2$ and $b+2$ points, respectively, with $a,b\ge0$, then there are $a+b$ points that can be a tail of a dart, and thus $k\leq a+b$. Distinguishing several cases depending on $a$, $b$, and $k$, we arrive at a unified formula for the number of components of the flip graph: Intuitively, we distribute $k$ indistinguishable balls over $2$  distinguishable (fixed-size) bins.


\begin{corollary}
    The number of connected components of the flip graph of $k$-DPTs on a point set~$P_\asymp = P_1 \cup P_2$, with $|P_1|=a+2$ and $|P_2|=b+2$, is equal to $\min\{a,b,k,a+b-k\}+1$ for $0\le k\le a+b$.
\end{corollary}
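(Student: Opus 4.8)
The plan is to reduce to counting dart designations via Theorem~\ref{thm:concave-distribute} and then simplify a $\min/\max$ expression. First I would pin down which designations are realizable. The four corners of the convex hull of $P_\asymp$ lie on the hull and hence can never be dart tails, so among the $a+2$ points of $P_1$ only the $a$ interior ones are eligible tails, and likewise only $b$ points are eligible in $P_2$; moreover distinct darts have pairwise distinct tails, since a tail is an interior pointed vertex and each such vertex is the tail of exactly one dart. Therefore a designation of $i$ darts to $P_1$ and $k-i$ to $P_2$ requires $0\le i\le a$ and $0\le k-i\le b$. Conversely, for every such $i$ a $k$-DPT with that designation exists: take the canonical $k$-DPT with $i$ aligned darts pushed leftmost along $P_1$ and $k-i$ aligned darts pushed leftmost along $P_2$ (there is room on both chains because $i\le a$ and $k-i\le b$, and all remaining faces are triangulated arbitrarily, exactly as in the construction of Lemma~\ref{lem:concave-canonical}). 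By Theorem~\ref{thm:concave-distribute}, the number of components is precisely the number of admissible values of $i$.

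Next I would count. The admissible values are the integers $i$ with $\max\{0,\,k-b\}\le i\le\min\{a,\,k\}$, so their number is $\min\{a,k\}-\max\{0,k-b\}+1$. It then remains to verify that $\min\{a,k\}-\max\{0,k-b\}=\min\{a,b,k,a+b-k\}$. Rewriting $\max\{0,k-b\}=k-\min\{b,k\}$ turns the left-hand side into $\min\{a,k\}+\min\{b,k\}-k$. Here I would invoke the identity
\[
\min\{a,k\}+\min\{b,k\}=\min\{a+b,\ a+k,\ b+k,\ 2k\},
\]
which follows from a four-case split on the signs of $a-k$ and $b-k$; subtracting $k$ then gives $\min\{a+b-k,\ a,\ b,\ k\}$, as wanted. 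I would also observe that the interval $[\max\{0,k-b\},\ \min\{a,k\}]$ is nonempty exactly when $0\le k\le a+b$, matching the stated range, and that in this range $\min\{a,b,k,a+b-k\}\ge0$, so the formula correctly returns at least one component.

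I do not anticipate a genuine obstacle: the content is a clean counting argument followed by an elementary algebraic identity. The two places that warrant care are (i) the capacity claim — that hull corners cannot be tails and that darts have distinct tails, which is what makes $a$ and $b$ the correct bin sizes and couples this corollary to Theorem~\ref{thm:concave-distribute}; and (ii) not dropping a case in the $\min/\max$ identity, in particular the boundary behaviour at $k=0$ and $k=a+b$, which recover the pure-triangulation and $4$-PPT regimes, each a single component, consistent with the formula.
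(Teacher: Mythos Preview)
Your proposal is correct and follows essentially the same approach as the paper: reduce via Theorem~\ref{thm:concave-distribute} to counting distributions of $k$ indistinguishable darts into two chains of capacities $a$ and $b$, then simplify to the closed form. Your algebraic identity $\min\{a,k\}+\min\{b,k\}-k=\min\{a,b,k,a+b-k\}$ is a clean way to collapse what the paper only sketches as ``distinguishing several cases depending on $a$, $b$, and $k$''.
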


\section{Conclusion}
We studied the flip graph of pseudo-triangulations with faces of size 3 (triangles) and a bounded number~$k$ of size-4 faces (darts). For $k=1$ in any point configuration, and for any~$k$ in the double chain point configuration, we showed how to find the connected components of the flip graph. For general point configurations, we conjecture that a similar approach will work for slightly higher values of~$k$, such as $k\in\{2,3\}$. Our goal in studying these special configurations is to obtain new insights into the flip graph of pointed pseudo-triangulation with faces of size 3 or 4. However, our current findings do not seem to allow us to reach much further than low values of~$k$, as the required number of cases for higher~$k$ becomes infeasible.



\bibliography{references}

\end{document}